\documentclass{article}
\usepackage{amsthm,comment}
\usepackage{natbib} 
\newtheorem{Theorem}{Theorem}

\newtheorem{Lemma}{Lemma}
\newtheorem{Corollary}{Corollary}
\newtheorem{Remark}{Remark}
\newtheorem{Assumption}{Assumption}
\newtheorem{Example}{Example}

\title{\LARGE \bf
Augmenting Automatic Differentiation for a Single-Server Queue via the Leibniz Integral Rule
}

\author{Michael C. Fu \\
Robert H. Smith School of Business \& Institute for Systems Research, \\
University of Maryland, College Park, Maryland 20742, USA; mfu@umd.edu}

\begin{document}
\maketitle
\begin{abstract}
New recursive estimators for computing higher-order derivatives of mean queueing time from a single sample path of a first-come, first-served single-server queue are presented, 
derived using the well-known Lindley equation and applying the Leibniz integral rule of differential calculus.
Illustrative examples are provided. \\

\noindent
{\bf keywords}: queueing theory; Monte Carlo simulation; sample path analysis; derivative estimation; perturbation analysis; likelihood ratio method; score function method, measure-valued differentiation; weak derivatives; Lindley equation; Leibniz rule. 

\end{abstract}
\section{Introduction}

The general derivative estimation problem is to find an unbiased estimator for the output performance measure of a stochastic system with respect to (w.r.t.) some input scalar parameter. 
For a parameter vector, the problem is to find an unbiased stochastic gradient estimator. 
In this work, we consider an output performance measure that is the expectation of a sample performance, as opposed to say a quantile. 
For queueing systems, common sample performances involve waiting times and queue lengths. 

Specifically, derivative estimation for waiting times (customer delays) is analyzed for a first-come, first-served (FCFS) single-server queue -- often indicated using queueing theory notation as a G/G/1 FCFS queue, 
where FCFS is also known as first-in, first-out (FIFO) queue discipline. 
The goal is to estimate higher-order derivatives of the expected waiting time of a customer w.r.t. a parameter in the service time distribution. 
Previous work has provided first-derivative estimators derived using the {\it likelihood ratio method} (LRM) ---   
see \cite{AlSySh68,Gl87,ReWe89,Ru89,Le90,PeFuHuHe18}; 
and {\it perturbation analysis} (PA) ---
see \cite{GoHo87,SuZa88,Su89,Fu89,HoCa91,Gl91,GaShSr92,ZaSu94,FuHu97}.
The chapter by \cite{Fu15a} in a handbook on simulation provides an overview of stochastic gradient estimation, and
both \cite{MoRoFiMn20}
and
\cite{FuHuSc25} 
provide history and perspective on stochastic gradients and their relationship to modern applications in AI/ML such as stochastic gradient descent (SGD) algorithms for training (deep) artificial neural networks and policy gradient algorithms for reinforcement learning.
%
Other recommended textbooks and research monographs addressing stochastic gradiention include
\cite{RuSh93, Pf96, Gl04, AsGl07, CaLa21, VaHe25}.

The main result in this work are new higher-derivative estimators derived using simple calculus, in particular the Leibniz rules for differentiating integrals and products of functions, following the ideas set out in \cite{PuLe22, ReFu24, ReFuLe25}.
Starting with a well-known deterministic recursive equation describing the dynamics of the sample path of the system, the challenge in handling a discontinuity w.r.t. the input parameter is highlighted, viz., 
a ``kink'' in the sample performance at a single point that leads to a jump at the point for the derivative and subsequently all higher-order derivatives. As a result, sample path derivative estimators derived using infinitesimal perturbation analysis (IPA) and corresponding to what automatic differentiation would produce for second- and higher-order derivatives are biased.
Assuming that a single interarrival time is random, 
a simple probabilistic expression is used to express the expected value output performance measure in terms of an integral that can be differentiated using the Leibniz integral rule under appropriate technical conditions. 
This can also be viewed as a version of conditional Monte Carlo, which was previously applied to derive a different but closely related estimator. 
Two illustrative examples for the cases of a location or scale parameter are provided to illustrate the new estimator. 

\section{Problem Setting}

\subsection{Estimation Problem}
Let $W_i, i \in {\cal Z}^+,$ denote the waiting (delay in queue) time of the $i$th customer, 
and $\theta$ denote a parameter in the distribution of the service time of a single customer in the system.
Then the goal is to find unbiased estimators for the $n$th derivative of the expected value of $W_i$ 
w.r.t.\ $\theta, ~n=1, 2,...$, i.e.,  
to find $\widehat{W}^{(n)}_i$ such that
\begin{equation}
E[\widehat{W}^{(n)}_i] = \frac{d^nE[W_i]}{d\theta^n}.
\label{eq:est}
\end{equation}
Notationally, we will also interchangeably use ``prime'' notation for the derivatives , viz., 
$$
W_i' \equiv W^{(1)}_i, W_i'' \equiv W^{(2)}_i, W_i''' \equiv W^{(3)}_i, 
$$
where superscript ``$(n)$'' denotes the $n$th derivative, and
$$
\widehat{W}_i' \equiv \widehat{W}^{(1)}_i, \widehat{W}_i'' \equiv \widehat{W}^{(2)}_i, 
\widehat{W}_i''' \equiv \widehat{W}^{(3)}_i, . 
$$

\subsection{Single-Server Queue Inputs}

The inputs of the single-server queue are $\{A_i, ~i \in {\cal Z}^+\}$ and $\{S_i, ~i \in {\cal Z}^+\}$, 
where $A_i$ is defined as the time between the arrival of the $i$th and $(i$+1)st customer, 
and $S_i$ denotes the service time of the $i$th customer, all assumed to be positive real numbers (later positive random variables). 
Assume for simplicity and without loss of generality, that the first customer arrives at time 0, 
and that the system starts empty, so that the 1st customer has no delay, 
i.e., 
\begin{Assumption} \label{ass1}
$
W_1 \equiv 0.
$
(Hence, $W_1^{(n)} = 0 ~\forall n \geq 1.$)
\end{Assumption}

\noindent
Further assume henceforth that $\theta$ affects the service time of \textbf{\textit{just the first customer}} in a continuous manner,  
i.e., 
\begin{Assumption} \label{ass2}
$S_1(\theta)$ is continuous w.r.t. $\theta$ 
such that $S_1'$ exists almost everywhere (a.e.), and 
$
\frac{dS_i}{d\theta} \equiv S_i' = 0 ~\forall i \geq 2, ~~
\frac{dA_i}{d\theta} \equiv A_i'= 0 ~\forall i.
$
(Hence, for $n$ $\geq$ $1$, $S_i^{(n)} = 0 ~\forall i \geq 2, ~~
A_i^{(n)} = 0 ~\forall i.$)
\end{Assumption}

Note that up to now, no other assumptions have been made on 
$\{A_i \}$ and $\{S_i \}$, in particular regarding probabilistic/statistical properties, 
so that for now they could just be fixed numbers, 
aside from $S_1$, which is a function of $\theta$ but could also be deterministic. 
In other words, these could be viewed as (deterministic) single realizations or possibly a sample path from an underlying stochastic system.

\subsection{FCFS Dynamics via the Lindley Equation}
Under the first-come-first-served (FCFS) service discipline (first-in-first-out FIFO queue discipline), 
the dynamics of the single-server queue are governed by the well-known Lindley equation \citep{Li52}:
\begin{equation}
W_{i+1} =  \left (W_i + S_i - A_i \right)^+ ,~i \geq 1,
\label{eq:Lindley}
\end{equation}
where $x^+ := \max(x,0)$. 
It will be more convenient to express the Lindley equation using indicator functions, i.e., 
\begin{equation}
W_{i+1} =  (W_i + S_i - A_i) {\bf 1}\{W_i + S_i \geq  A_i\},~i \geq 1,
\label{eq:Lindley2}
\end{equation}
where ${\bf 1}\{ \cdot \}$ denotes the indicator function.

In queueing theory terminology, a busy period ends when the indicator function is zero, 
at which point the $i$th customer is the last customer served in that busy period, 
and the $(i+1)$st customer begins a new busy period, in which case  $W_{i+1} =0$.

\subsection{Derivatives}

Straightforward differentiation of Eq.(\ref{eq:Lindley2}) yields
\begin{equation}
W_{i+1}' =  (W_i'+S_i') {\bf 1}\{W_i + S_i >  A_i\},
\label{eq:IPA-Lindley1}
\end{equation}
leading to the following result, using $W_1=0$ and $S_i^{(n)} = 0 ~\forall i \geq 2$.
\begin{Lemma} \label{lemma1}
Under Assumptions \ref{ass1} and \ref{ass2},
\begin{eqnarray}
W_2' &=& S_1' {\bf 1}\{ S_1 >  A_1 \},  \nonumber \\
W_{i+1}' &=&  W_i' {\bf 1}\{W_i + S_i >  A_i\},~i \geq 2,
\label{eq:IPA-Lindley}
\end{eqnarray}
except at the point $W_i + S_i =  A_i$, where they are undefined. 
\end{Lemma}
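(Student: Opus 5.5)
The plan is to prove the statement by induction on $i$, differentiating the Lindley recursion (\ref{eq:Lindley2}) pathwise in $\theta$. At each step we use the fact that the map $x \mapsto x^+$ is differentiable everywhere except at $x=0$, with derivative ${\bf 1}\{x>0\}$.

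Define $X_i(\theta) := W_i(\theta) + S_i(\theta) - A_i$, so that $W_{i+1} = X_i^+$. We will argue that each $W_i$ is continuous in $\theta$, and differentiable wherever all the relevant $X_j \neq 0$ and $S_1'$ exists. The base of the induction is Assumption \ref{ass1}: $W_1 \equiv 0$ gives $W_1' = 0$.

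For $i=1$, Assumption \ref{ass2} gives $X_1 = S_1(\theta) - A_1$ with $A_1' = 0$. The chain rule applied to $(\cdot)^+$ at any $\theta$ where $S_1 \neq A_1$ then yields $W_2' = S_1' {\bf 1}\{S_1 > A_1\}$. We check the two sides separately. On the open set where $S_1 > A_1$ (open by continuity of $S_1$), we have $W_2 = S_1 - A_1$ locally, so $W_2' = S_1'$. On the set where $S_1 < A_1$, we have $W_2 \equiv 0$ locally, so $W_2' = 0$.

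For $i \ge 2$, Assumption \ref{ass2} gives $S_i' = 0$ and $A_i' = 0$. Hence $X_i' = W_i'$ wherever $W_i'$ exists. Again by continuity, the sets $\{X_i > 0\}$ and $\{X_i < 0\}$ are locally stable. On the first, $W_{i+1} = X_i$ locally; on the second, $W_{i+1} \equiv 0$ locally. This gives $W_{i+1}' = W_i' {\bf 1}\{X_i > 0\}$, which is (\ref{eq:IPA-Lindley}). This is just (\ref{eq:IPA-Lindley1}) specialized by the two assumptions.

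The only delicate point is the exceptional set $X_i = 0$. There $(\cdot)^+$ has a kink: the left and right derivatives differ, one being $X_i'$ and the other $0$, whenever $X_i' \ne 0$. So the derivative is undefined there, which is exactly the exception stated in the lemma. I would note this as the main subtlety rather than a real obstacle. Continuity of $W_i$, needed for the local-stability argument, follows inductively because compositions of continuous maps with $\max(\cdot,0)$ are continuous.
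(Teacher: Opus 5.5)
Your proposal is correct and follows the paper's route: differentiate the Lindley recursion pathwise to obtain (\ref{eq:IPA-Lindley1}), then specialize using $W_1\equiv 0$, $A_i'=0$, and $S_i'=0$ for $i\ge 2$. Your continuity and local-stability argument makes rigorous the ``straightforward differentiation'' the paper takes for granted, and your observation that the kink only matters when $X_i'\neq 0$ slightly sharpens the lemma's stated exception.
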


\begin{proof}
Apply Assumptions \ref{ass1} and \ref{ass2} to (\ref{eq:IPA-Lindley1}) for $i=1$ and $i \geq 2$. 
\end{proof}

\noindent
In queueing theory terminology, when a busy period ends, the derivative resets to zero, 
and in our simplified setting where $\theta$ only affects the first customer's service time, 
remains zero for all subsequent customers, i.e., 
if $i^*$ is the index of the customer ending the first busy period, 
then (\ref{eq:IPA-Lindley}) can be rewritten succinctly as (for $i>1$)
\begin{equation}
W_i' =  S_1' {\bf 1}\{ i \leq i^* \}, \mbox{~where~} i^* \equiv \min \{i\geq 1: W_i + S_i > A_i\}.
\label{eq:BPend}
\end{equation}
In other words, $W_i' =  S_1' $ for all subsequent customers in the 1st customer's busy period (which could be none), and zero for all other subsequent customers. 

Induction on $n$, with Lemma \ref{lemma1} providing the base case $n=1$ gives the next result, 
which is essentially what automatic differentiation would give. 

\begin{Lemma} \label{lemma2}
Under Assumptions \ref{ass1} and \ref{ass2}, 
and assuming furthermore that $S_1^{(n)}$ exists a.e.,
\begin{eqnarray}
W_2^{(n)} &=& S_1^{(n)} {\bf 1}\{ S_1 >  A_1 \}, ~n \geq 1, \nonumber \\
W_{i+1}^{(n)} &=&  W_i^{(n)} {\bf 1}\{W_i + S_i >  A_i\},~i \geq 2,~n \geq 1, \label{eq:IPA-Lindley2} \\
\Longrightarrow W_j^{(n)} &=&  S_1^{(n)} {\bf 1}\{ i \leq i^* \}, ~j \geq 2, ~n \geq 1, \nonumber 
\end{eqnarray}
except at the point $W_i + S_i =  A_i$, where they are undefined. 
\end{Lemma}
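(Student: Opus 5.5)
The plan is to argue by induction on the derivative order $n$, as the text preceding the statement suggests, with Lemma \ref{lemma1} as the base case $n=1$. The key observation is that off the exceptional set, each indicator in (\ref{eq:Lindley2}) is \emph{locally constant} in $\theta$. That set is where $W_i + S_i = A_i$ for some $i$, together with the null set where $S_1^{(n)}$ fails to exist. Fix a $\theta$ at which no equality $W_i + S_i = A_i$ holds for the finitely many indices $i$ involved. By Assumption \ref{ass2}, $S_1$ is continuous in $\theta$, and inductively each $W_i$ is continuous, being built from $\max(\cdot,0)$ and sums of continuous functions. Hence every strict inequality $W_i+S_i>A_i$ or $W_i+S_i<A_i$ persists on a neighborhood of $\theta$. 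On that neighborhood, (\ref{eq:Lindley2}) is a product of the smooth expression $W_i + S_i - A_i$ and a constant $0$ or $1$.

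For the induction step, assume the formulas of the statement hold for order $n$ on such a neighborhood. Then $W_{i+1}^{(n)} = (W_i^{(n)} + S_i^{(n)}){\bf 1}\{W_i+S_i>A_i\}$ holds identically in $\theta$ there, with the indicator constant. Differentiating once more, the indicator contributes no derivative term, so the Leibniz product rule simply gives $W_{i+1}^{(n+1)} = (W_i^{(n+1)} + S_i^{(n+1)}){\bf 1}\{W_i+S_i>A_i\}$. Next apply Assumption \ref{ass1} ($W_1^{(n+1)}=0$) and Assumption \ref{ass2}. For $i=1$ these give $W_2^{(n+1)} = S_1^{(n+1)}{\bf 1}\{S_1>A_1\}$. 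For $i\ge 2$, $S_i^{(n+1)}=0$ and $A_i^{(n+1)}=0$, which yields (\ref{eq:IPA-Lindley2}) at order $n+1$.

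The closed form is then obtained by a second, inner induction on $j$ for each fixed $n$, mirroring the passage from Lemma \ref{lemma1} to (\ref{eq:BPend}). While the first customer's busy period continues, the indicators are all $1$, so $W_j^{(n)} = S_1^{(n)}$ propagates unchanged. Once an indicator is $0$, the busy period ends, $W_j^{(n)}$ resets to $0$, and (\ref{eq:IPA-Lindley2}) keeps it at $0$ thereafter, since multiplying $0$ by an indicator gives $0$ and no further $S_i^{(n)}$ enters. This gives $W_j^{(n)} = S_1^{(n)}{\bf 1}\{j\le i^*\}$, where $i^*$ is the index of the customer ending the first busy period.

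I expect the main obstacle to be the local-constancy step, not the algebra. One must make precise that at the boundary points $W_i+S_i=A_i$ the derivatives are undefined, as the statement allows. Away from them the higher derivatives exist, which requires $S_1^{(n)}$ to exist in a neighborhood and not just at a point, so that differentiation can be iterated. I would handle this by restricting to $\theta$ in the open set where all relevant strict inequalities hold and $S_1$ is $n$ times differentiable, and interpreting ``a.e.'' accordingly.
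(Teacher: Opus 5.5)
Your proposal is correct and follows the paper's own route: induction on $n$ from Lemma~\ref{lemma1}, differentiating (\ref{eq:Lindley2}) with the indicator held fixed, applying Assumptions~\ref{ass1} and~\ref{ass2} separately for $i=1$ and $i\ge 2$, and propagating along the first busy period; your local-constancy argument makes rigorous what the paper calls ``straightforward differentiation,'' and your reading of $i^*$ as the first index whose indicator vanishes is the intended one despite the reversed inequality in the paper's display. One overstatement: $S_1^{(n)}$ need not exist on a whole neighborhood, because existence at $\theta$ already forces $S_1^{(n-1)}$ to exist near $\theta$, and on the neighborhood where all indicators are constant $W_j$ equals ${\bf 1}\{j\le i^*\}\,S_1$ plus a $\theta$-free constant.
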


\begin{proof}
Straightforward differentiation of Eq.(\ref{eq:Lindley2}) yields
\begin{equation}
W_{i+1}^{(n)} =  (W_i^{(n)}+S_i^{(n)}) {\bf 1}\{W_i + S_i >  A_i\},
\label{eq:IPA-Lindley-n}
\end{equation}
and then apply Assumptions \ref{ass1} and \ref{ass2} to (\ref{eq:IPA-Lindley-n}) for $i=1$ and $i \geq 2$. 
\end{proof}

\noindent
Again, when a busy period ends, the $n$th derivative resets to zero. 

\medskip
\noindent
\textbf{\textit{In the stochastic setting, the first-derivative estimators given by (\ref{eq:IPA-Lindley}) or by (\ref{eq:IPA-Lindley2}) for $n = 1$ will generally be unbiased under mild conditions, 
whereas the second-order and higher derivative estimators given by (\ref{eq:IPA-Lindley2}) for $n \geq 2$ will generally be biased.}}
Henceforth, the estimators given by (\ref{eq:IPA-Lindley}) or by (\ref{eq:IPA-Lindley2}) will be referred to as the {\bf IPA} estimators. 
As noted earlier, when a busy period ends, the IPA estimator resets. 
In the specific setting of Assumption \ref{ass2}, where the parameter affects only the very first service time, the IPA estimator is zero for the rest of the sample path following the first busy period, ended by customer $i^*$. 

\begin{Remark} \label{rem1}
To handle the point where the derivative is not well-defined, 
one approach is to employ impulse/delta functions, but that is not pursued in this work. 
Instead, the problem can be avoided in the stochastic setting by taking all the underlying random variables to be continuously valued with no mass at discrete points, so that the probability of the single point is zero. 
\end{Remark} 

\noindent
We now present the primary assumptions for the stochastic setting. 
\begin{Assumption} \label{ass3}
Customer interarrival times and service times are 
\begin{itemize}
\item[(i)] random variables drawn from (purely) continuous probability distributions, and
\item[(ii)] mutually independent. 
\end{itemize}
\end{Assumption}

\begin{Remark} \label{rem2}
The stochastic setting considered here is more restrictive than needed
-- mainly for ease of exposition --
albeit more general than the usual G/G/1 queue with a renewal arrival process 
and independent and identically distributed (i.i.d.) service times. 
\end{Remark} 

\noindent
The following two examples provide simple illustrative special cases. 

\begin{Example} \label{ex1}
Customers arrive according to a renewal process, and 
service times are mutually independent and independent of the arrival process, 
with the first service time distributed following a uniform distribution $U(\theta-\delta,\theta+\delta),~\delta \leq \theta$, 
i.e., uniformly between the interval with mean $\theta$ and half-width $\delta$. 
In particular, $\theta$ is a location parameter for the distribution, so $S_1'=1$ and $S_1''=0$. 
Hence, the IPA estimators are given by 
\begin{eqnarray}
W_2' &=& {\bf 1}\{ S_1 >  A_1 \},  \nonumber \\
W_{i+1}' &=&  W_i'  {\bf 1}\{W_i + S_i >  A_i\},~i \geq 2, \nonumber \\
W_i^{(n)} &=& 0, ~i \geq 1,~n \geq 2. \nonumber 
\end{eqnarray}
\end{Example}

\begin{Example} \label{ex2}
Customers arrive according to a renewal process, and 
service times are mutually independent and independent of the arrival process, 
with the first service time exponentially distributed with mean $\theta$. 
In particular, $\theta$ is a scale parameter for the distribution, so $S_1'=S_1/\theta$ but again $S_1''=0$. 
Hence, the IPA estimators are given by 
\begin{eqnarray}
W_2' &=& \frac{S_1}{\theta} {\bf 1}\{ S_1 >  A_1 \},  \nonumber \\
W_{i+1}' &=&  W_i' {\bf 1}\{W_i + S_i >  A_i\},~i \geq 2, \nonumber \\
W_i^{(n)} &=& 0, ~i \geq 1,~n \geq 2. \nonumber 
\end{eqnarray}
\end{Example} 

\begin{Example} \label{ex3}
Customers arrive according to a renewal process, and 
service times are mutually independent and independent of the arrival process, 
with the first service time distributed $U(0,\theta)$, so that $\theta$ is again a scale parameter,  
and the IPA estimators are the same as in Example \ref{ex2}.
\end{Example} 

\begin{Remark} \label{rem3}
The expressions in Examples \ref{ex1} and \ref{ex2} hold more generally whenever the parameter is a location or scale parameter, respectively. 
\end{Remark}

\section{Higher-Order Derivative Estimators Derived via the Leibniz Integral Rule}

When $S$ is a continuous random variable, 
its sample derivative can be expressed in terms of its cumulative distribution function (c.d.f.), 
i.e., the following result holds \cite{Gl91, ZaSu94}.

\begin{Lemma}  \label{lemma3}
Under Assumption \ref{ass3}(i),  
\begin{equation}
S' = \left. - \frac{dF(x;\theta)/d\theta}{dF(x;\theta)/dx} \right|_{x=S},~~
S \sim F(\cdot; \theta), 
\end{equation}
where $F$ denotes the c.d.f.\ of $S$.
\end{Lemma}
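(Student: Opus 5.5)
We will prove the formula by realizing $S$ through the inverse-transform construction and then differentiating an identity in $\theta$. Under Assumption \ref{ass3}(i), $F(\cdot;\theta)$ is continuous. On the support of $S$ we also assume it is strictly increasing, which is the regime where $dF(x;\theta)/dx>0$ and the formula makes sense. Write $S(\theta)=F^{-1}(U;\theta)$ with $U\sim U(0,1)$ held fixed as $\theta$ varies. This is the coupling implicit in the paper's notion of a sample derivative $S'$ (cf.\ Assumption \ref{ass2}, where $S_1(\theta)$ is a pathwise function of $\theta$). By construction,
\begin{equation}
F\bigl(S(\theta);\theta\bigr)=U \quad \mbox{for all } \theta ,
\label{eq:invid}
\end{equation}
and the right-hand side does not depend on $\theta$.

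We then differentiate (\ref{eq:invid}) totally in $\theta$. By the chain rule, assuming $F$ is jointly differentiable in $(x,\theta)$ at $(S,\theta)$,
$$
\left.\frac{\partial F(x;\theta)}{\partial x}\right|_{x=S} S' + \left.\frac{\partial F(x;\theta)}{\partial \theta}\right|_{x=S} = 0 .
$$
Solving for $S'$ gives exactly the stated expression, provided the density $f(S;\theta)=\partial F/\partial x|_{x=S}$ is nonzero. That holds almost surely, since $P(f(S;\theta)=0)=\int_{\{f=0\}} f\,dx=0$.

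The main obstacle is justifying that $S(\theta)=F^{-1}(U;\theta)$ is differentiable in $\theta$ in the first place, so that the chain rule applies. The cleanest route is the implicit function theorem applied to $G(x,\theta)=F(x;\theta)-U$. If $F$ is $C^1$ in a neighborhood of $(S,\theta)$ and $\partial G/\partial x=f(S;\theta)>0$, the theorem yields a unique differentiable local solution $x(\theta)$. Strict monotonicity of $F$ in $x$ identifies this local solution with $F^{-1}(U;\theta)$. The exceptional set where $f=0$, or where $S$ falls at a point of non-differentiability, has probability zero. This is in the spirit of Remark \ref{rem1}, and so the identity holds a.e., consistent with the a.e.\ existence of $S'$ in Assumption \ref{ass2}.

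As a sanity check, for the location family of Example \ref{ex1}, $F(x;\theta)=F_0(x-\theta)$. Then $\partial F/\partial\theta=-F_0'(x-\theta)$ and $\partial F/\partial x=F_0'(x-\theta)$, so the formula gives $S'=1$. For the scale family of Examples \ref{ex2}--\ref{ex3}, $F(x;\theta)=F_0(x/\theta)$. Then $\partial F/\partial\theta=-(x/\theta^2)F_0'(x/\theta)$ and $\partial F/\partial x=(1/\theta)F_0'(x/\theta)$, so the formula gives $S'=S/\theta$. Both agree with the examples.
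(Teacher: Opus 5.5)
Your proof is correct. It is the standard inverse-transform argument (fix $U$, set $S=F^{-1}(U;\theta)$, and differentiate $F(S(\theta);\theta)=U$ implicitly), which underlies the references \cite{Gl91, ZaSu94} that the paper cites instead of giving its own proof; your extra conditions (joint $C^1$ smoothness of $F$ near $(S,\theta)$, positive density at $S$, and the explicit inverse-transform coupling) make precise what the paper's informal statement leaves implicit.
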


\noindent
This simplifies to the location and scale parameter cases of Examples \ref{ex1} and \ref{ex2}, respectively, 
which can be shown more generally for discrete random variables, as well, e.g., \cite{Gl91}.

The first two results on the 1st and 2nd derivatives are well known in the literature,
e.g., \cite{Su89, Fu89, HoCa91, Gl91, ZaSu94}, albeit not in this particular form and setting. 
For the 1st derivative, the IPA estimators are unbiased, i.e., Eq.(\ref{eq:est}) holds for $n=1$.
\begin{Theorem} \label{thm1}
Under Assumptions \ref{ass1}-\ref{ass3},
$$
E[W_i'] = \frac{dE[W_i]}{d\theta} ~\forall i \geq 1.
$$
\end{Theorem}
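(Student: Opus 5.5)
We prove Theorem \ref{thm1} by the standard IPA unbiasedness argument: establish almost-sure differentiability and a Lipschitz (in $\theta$) property of the sample performance $W_i(\theta)$, then invoke the dominated convergence theorem to interchange $d/d\theta$ and $E$. The case $i=1$ is trivial by Assumption \ref{ass1}, so fix $i\geq 2$ and a sample point $\omega$, i.e., fixed realizations of $A_1,\dots,A_{i-1}$, $S_2,\dots,S_{i-1}$, and the underlying randomness generating $S_1(\theta)$ (e.g., the uniform random number in an inverse-transform representation $S_1 = F^{-1}(U;\theta)$, consistent with Lemma \ref{lemma3}).

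\textbf{Step 1 (Lipschitz continuity).} Since $x\mapsto x^+$ is Lipschitz with constant 1, induction on $i$ applied to the Lindley equation (\ref{eq:Lindley}) gives, for $\theta_1,\theta_2$ in a neighborhood $\Theta$ of $\theta$,
$$
|W_{i}(\theta_1)-W_{i}(\theta_2)| \leq |S_1(\theta_1)-S_1(\theta_2)|,
$$
because $\theta$ enters only through $S_1$ (Assumption \ref{ass2}). We therefore need $S_1$ to be Lipschitz in $\theta$ on $\Theta$ with a random modulus $K$ satisfying $E[K]<\infty$. This holds for a location parameter ($K=1$, Example \ref{ex1}) and for a scale parameter ($K = S_1/\theta_{\min}$ with $E[S_1]<\infty$, Examples \ref{ex2}--\ref{ex3}). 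In general we impose it as the ``mild condition'' alluded to before Remark \ref{rem1}, e.g., $|S_1'|\leq K$ on $\Theta$ with $E[K]<\infty$ and $S_1$ absolutely continuous in $\theta$. Then $|W_i(\theta+h)-W_i(\theta)|/|h| \leq K$.

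\textbf{Step 2 (a.s. differentiability).} By Lemma \ref{lemma1}, $W_i$ is differentiable at $\theta$ with derivative $W_i'$ given by (\ref{eq:IPA-Lindley}) except on the set where $W_j+S_j=A_j$ for some $j<i$, or where $S_1'$ fails to exist. By Assumption \ref{ass3}, $A_j$ is independent of $(W_j,S_j)$ (which depend only on earlier inputs and $S_j$) and has a continuous distribution, so $P(A_j = W_j+S_j)=0$. A finite union over $j<i$ still has probability zero, and $S_1'$ exists a.s. by Assumption \ref{ass2} (measure zero in the randomness at fixed $\theta$, cf. Remark \ref{rem1}). Thus $(W_i(\theta+h)-W_i(\theta))/h \to W_i'$ a.s. as $h\to 0$.

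\textbf{Step 3 (interchange).} The difference quotients are dominated by the integrable $K$ from Step 1, so the dominated convergence theorem gives
$$
\frac{dE[W_i]}{d\theta}=\lim_{h\to0}E\!\left[\frac{W_i(\theta+h)-W_i(\theta)}{h}\right]=E[W_i'].
$$

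The main obstacle is Step 1's domination requirement, which is the only place where a condition beyond Assumptions \ref{ass1}--\ref{ass3} is genuinely needed. We will state it explicitly (integrable Lipschitz modulus of $S_1(\theta)$, automatically satisfied in the location/scale cases of Remark \ref{rem3}) rather than try to derive it from continuity of $F$ alone. The Lipschitz bound itself is also what prevents this argument from extending to $n\geq 2$: $W_i'$ jumps at the kink, so it is not Lipschitz in $\theta$.
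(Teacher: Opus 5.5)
Your proposal is correct and follows essentially the same route as the paper: induction through the Lindley recursion to control $W_i(\theta)$, followed by the dominated convergence theorem. The paper cites only continuity of $W_i(\theta)$, which does not by itself supply a dominating function, so your explicit integrable Lipschitz modulus for $S_1(\theta)$ (automatic in the location and scale cases) is exactly the ``mild condition'' that the paper's one-line argument leaves implicit, and you were right to state it.
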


\begin{proof}
Follows straightforwardly using the dominated convergence theorem, 
since the function $W_2(\theta) = (S_1(\theta) - A_1)^+$ is continuous w.r.t $\theta$ under Assumption \ref{ass2} 
for $i=1$, and then by induction for $i \geq 1$, 
with the function $W_{i+1}(\theta) = (W_i(\theta) + S_i - A_1)^+$ continuous w.r.t $\theta$.
Assumptions \ref{ass1} and \ref{ass3} are needed for the quantities to be well-defined. 
\end{proof}

\noindent
However, for the 2nd derivative, the IPA estimators are generally biased, i.e.,  
$$
E[W_j''] \neq \frac{d^2E[W_j]}{d\theta^2}.
$$

\noindent
Examples \ref{ex1} and \ref{ex2} provide counterexamples, since they give $W_j'' = 0 ~\forall j$, 
whereas $\frac{d^2E[W_2]}{d\theta^2}$ may be nonzero, e.g.,  
if $A_1 \sim U(0,1)$ and $S_1 = \theta \in (0,1)$, then $E[W_2] = \theta^2/2$, so $\frac{d^2E[W_2]}{d\theta^2} =1$.  \\
%
In fact, Examples \ref{ex1} and \ref{ex2} can also provide counterexamples for higher-order derivatives, for which all the IPA estimators are 0.

%
%
%
%


\medskip
\noindent
Theorem \ref{thm1} can be used to find an unbiased estimator for the 2nd derivative, starting with  
\begin{Corollary} \label{cor1}
Under Assumptions \ref{ass1}-\ref{ass3},
$$
\frac{dE[W_i']}{d\theta} = \frac{d^2E[W_i]}{d\theta^2} ~\forall i \geq 1.
$$
\end{Corollary}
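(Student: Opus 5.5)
The plan is to obtain Corollary \ref{cor1} directly from Theorem \ref{thm1} by differentiating both sides of its identity once more in $\theta$. Theorem \ref{thm1} gives, for each fixed $i \geq 1$, the equality of functions of $\theta$
$$
E[W_i'](\theta) = \frac{dE[W_i]}{d\theta}(\theta)
$$
on the whole parameter interval under consideration, not just at one point. Whenever two functions agree on an open interval, their derivatives there also agree, provided one of them is differentiable. So I will differentiate both sides with respect to $\theta$: the left side gives $dE[W_i']/d\theta$ and the right side gives $d^2E[W_i]/d\theta^2$, which is the claim. The case $i=1$ is trivial, because $W_1 \equiv 0$ by Assumption \ref{ass1}, so both sides vanish.

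The only real content is that the functions being differentiated are differentiable, so that the statement is meaningful. I will read the corollary as an identity between the two sides whenever either one exists, which is how it is used to build a second-derivative estimator. To justify existence under Assumptions \ref{ass1}--\ref{ass3}, I will condition on everything except the first interarrival time $A_1$, which has a continuous distribution by Assumption \ref{ass3}. By Lemma \ref{lemma2} and (\ref{eq:BPend}), $W_i' = S_1' {\bf 1}\{i \leq i^*\}$. Its expectation can then be written as an integral over $A_1$ whose integration region depends on $\theta$ only through $S_1(\theta)$. That function is differentiable in $\theta$ a.e. by the Leibniz integral rule, using Lemma \ref{lemma3} to express $S_1'$.

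I expect this existence step to be the main obstacle, because it requires regularity of the density of $A_1$ and of $S_1(\theta)$. The equality of the derivatives itself is immediate once Theorem \ref{thm1} is known to hold for every $\theta$ in an open set. I would therefore first state the corollary as ``the two sides are equal whenever either exists'' and defer the Leibniz-rule computation to the construction of the unbiased second-derivative estimator that follows.
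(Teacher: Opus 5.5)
Your argument is correct and follows the paper's route. The paper gives no separate proof: it presents the corollary as an immediate consequence of Theorem~\ref{thm1}, since $\theta \mapsto E[W_i']$ and $\theta \mapsto dE[W_i]/d\theta$ are the same function on the parameter interval, so one is differentiable exactly where the other is, with the same derivative.

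Your existence sketch is unnecessary, and it would not go through under Assumptions~\ref{ass1}--\ref{ass3} alone. Conditioning on everything but $A_1$ gives $S_1'$ times the c.d.f.\ of $A_1$ evaluated at $S_1(\theta)$ plus a $\theta$-free shift, and differentiating that requires $S_1''$. This is why the paper introduces Assumption~\ref{ass2}$'$ only afterward and obtains existence within the proof of Theorem~\ref{thm2}. Reading the corollary as ``equal whenever either side exists'' is the right interpretation.
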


\noindent
The expectation operators are taken w.r.t. all the randomness in the system, which could include both the arrival and service processes. 
Now consider expectation w.r.t. a single random variable, 
specifically the $i$th interarrival time $A_i \sim g_i$, where $g_i$ is the p.d.f.\ of $A_i$, 
which exists by Assumption \ref{ass3}. 
%
Taking the expectation w.r.t. $A_i$ in (\ref{eq:IPA-Lindley1}), denoted by $E_i$:
$$
E_i[W_{i+1}'] = \int_{0}^{W_i+S_i} (W_i'+S_i') g_i(x) dx = (W_i'+S_i') \int_{0}^{W_i+S_i}  g_i(x) dx,  
$$
where the second equality follows from $W_i'$ and $S_i'$ not depending on $A_i$ (from Assumptions \ref{ass2} and \ref{ass3}).  
To proceed further requires an extension of Assumption \ref{ass2}.  \\ [8pt]
{\bf Assumption \ref{ass2}$'$.}
In addition to Assumption \ref{ass2}, assume 
$S_1(\theta)$ is twice continuous w.r.t. $\theta$ 
such that $S_1''$ exists a.e. \\[8pt]
Differentiating using the Leibniz product and integral rules gives
\begin{equation} \label{eq:Leibniz}
\frac{dE_i[W_{i+1}']}{d\theta} = (W_i''+S_i'') \int_{0}^{W_i+S_i}  g_i(x) dx + (W_i'+S_i')^2 g(W_i+S_i),  
\end{equation} 
so that the following estimator can be shown to be unbiased by Corollary \ref{cor1}:
\begin{equation} \label{eq:Leibniz2}
\widehat{W}_{i+1}'' =  (\widehat{W}_i'' + S_i'') {\bf 1}\{W_i + S_i >  A_i\}  + (W_i'+S_i')^2 g_i(W_i + S_i), 
\end{equation} 
where the 2nd term augments the usual IPA estimator obtained by automatic differentiation. 

\noindent
Considering $i=1$ and $i>1$ separately, using $W_1 = 0$ and $S_i' = 0$ for $i > 1$:
\begin{eqnarray} 
\widehat{W}_2'' &=&  S_1'' {\bf 1}\{ S_1 >  A_1 \}  + (S_1')^2 g_1(S_1),  \label{eq:Leibniz2a} \\
\widehat{W}_{i+1}'' &=&  \widehat{W}_i'' {\bf 1}\{W_i + S_i >  A_i\}  + (W_i')^2 g_i(W_i + S_i),~i > 1, 
\label{eq:Leibniz2b}
\end{eqnarray} 
and the unbiasedness of the second-derivative estimator is easily established.

\begin{Theorem} \label{thm2}
Under Assumptions \ref{ass1}, \ref{ass2}$'$, and \ref{ass3},
$$
E[\widehat{W}_i''] = \frac{d^2E[W_i]}{d\theta^2} ~\forall i \geq 1.
$$
\end{Theorem}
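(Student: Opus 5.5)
We argue by induction on $i$, using Corollary \ref{cor1} to reduce the claim to showing $E[\widehat{W}_i''] = \frac{dE[W_i']}{d\theta}$ for every $i$.

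\emph{Base cases.} For $i=1$ we have $\widehat{W}_1''=0=W_1$ by Assumption \ref{ass1}, so the claim is trivial. For $i=2$, $W_2'=S_1'\mathbf{1}\{S_1>A_1\}$ by Lemma \ref{lemma1}. Condition on $S_1$, i.e., take the expectation $E_1$ over $A_1\sim g_1$ only; this is legitimate because $A_1$ is independent of $S_1$ by Assumption \ref{ass3}(ii). This gives $E_1[W_2']=S_1'\int_0^{S_1}g_1(x)\,dx$, a differentiable function of $\theta$ by Assumption \ref{ass2}$'$. The Leibniz product and integral rules, as in (\ref{eq:Leibniz}), yield
\[
\frac{dE_1[W_2']}{d\theta} = S_1''\int_0^{S_1}g_1(x)\,dx + (S_1')^2 g_1(S_1),
\]
which equals $E_1$ of the estimator (\ref{eq:Leibniz2a}).

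\emph{Inductive step.} Assume $E[\widehat{W}_i''] = \frac{dE[W_i']}{d\theta}$. Write $E[W_{i+1}'] = E\big[E_i[W_{i+1}']\big]$, where the outer expectation is over $\mathcal{F}_i=\sigma(A_1,\dots,A_{i-1},S_1,\dots,S_i)$, on which $W_i$ and $W_i'$ depend, and the inner expectation is over $A_i$, which is independent of $\mathcal{F}_i$. Then
\[
E_i[W_{i+1}'] = W_i' \, G_i(W_i+S_i), \qquad G_i(y)=\int_0^y g_i(x)\,dx .
\]
Differentiate $E\big[W_i' G_i(W_i+S_i)\big]$ in $\theta$. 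The product rule gives two terms:
\[
E\big[W_i''\,G_i(W_i+S_i)\big] \quad\text{and}\quad E\big[(W_i')^2 g_i(W_i+S_i)\big].
\]
The second term is exactly the augmentation term in (\ref{eq:Leibniz2b}). The first term needs care, because $W_i''$ here is the pathwise (IPA) derivative, which is biased. Differentiating the product must therefore itself be treated as a derivative of an expectation. I would handle this by defining $H_i(\theta)=E[W_i'\,\phi(W_i)]$ for the smooth weight $\phi=G_i(\cdot+S_i)$, and proving the generalized induction hypothesis
\[
\frac{d}{d\theta}E\big[W_i'\,\phi(W_i,\cdot)\big] = E\big[\widehat{W}_i''\,\phi + (W_i')^2\,\partial\phi\big]
\]
for bounded $C^1$ weights $\phi$ independent of future inputs. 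The same conditioning on $A_{i-1}$ and the Leibniz rule propagate this statement one step. Choosing $\phi=G_i(\cdot+S_i)$ and noting that $E_i\big[\widehat{W}_i''\,\mathbf{1}\{W_i+S_i>A_i\}\big]=\widehat{W}_i''\,G_i(W_i+S_i)$ then closes the step.

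\emph{Main obstacle.} The hardest part is justifying the interchanges of $d/d\theta$ and $E$. These are needed for the outer expectation in each step and for the generalized weighted hypothesis. I would first use dominated convergence and mean value bounds, which require integrability of $\sup_\theta|S_1'|^2$ and $\sup_\theta|S_1''|$ and boundedness of the densities $g_i$. These are implicit in the ``appropriate technical conditions'' of the paper. The key point is that after conditioning on $A_i$, the indicator is replaced by the continuous function $G_i$, which removes the kink that makes the IPA estimator biased.
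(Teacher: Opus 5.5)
\emph{Verdict:} there is a genuine gap. Your weighted induction hypothesis is false already at $i=2$, and the target identity itself fails at $i=3$.

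\emph{What works.} Your cases $i=1,2$ are correct and coincide with the paper's first computation. You also rightly flag the delicate point of the inductive step: $W_i'$ jumps in $\theta$ (already $W_2'=S_1'\mathbf{1}\{S_1>A_1\}$ jumps when $S_1$ crosses $A_1$), so $E[W_i'G_i(W_i+S_i)]$ cannot be differentiated pathwise. The paper's own inductive chain passes over exactly this point. It writes the IPA derivative $W_i''$ in place of $\widehat{W}_i''$. It then interchanges $d/d\theta$ and $E$ for $\int_0^{W_i+S_i}(W_i'+S_i')g_i(x)\,dx$, which is discontinuous in $\theta$.

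\emph{Where the hypothesis fails.} Conditioning on $A_1$ gives $E_1[W_2'\phi(W_2)]=S_1'\int_0^{S_1}\phi(S_1-x)g_1(x)\,dx$. The Leibniz boundary term is $(S_1')^2g_1(S_1)\phi(0)$, because on the boundary $A_1=S_1$ one has $W_2=0$. Hence
\[
\frac{d}{d\theta}E[W_2'\phi(W_2)]=E\left[S_1''\mathbf{1}\{S_1>A_1\}\phi(W_2)+(S_1')^2g_1(S_1)\phi(0)+(W_2')^2\partial\phi(W_2)\right].
\]
Your right-hand side instead carries $(S_1')^2g_1(S_1)\phi(W_2)$. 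The two differ by $E[(S_1')^2g_1(S_1)\{\phi(W_2)-\phi(0)\}]$, which is not zero in general. So the claimed one-step propagation fails at the first nontrivial step.

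\emph{Why no extra regularity conditions can rescue it.} Taking $\phi=G_2(\cdot+S_2)$ in the display shows that the statement itself fails at $i=3$:
\[
E[\widehat{W}_3'']-\frac{d^2E[W_3]}{d\theta^2}=E\left[(S_1')^2g_1(S_1)\{G_2(W_2+S_2)-G_2(S_2)\}\right]\geq 0 .
\]
The inequality is strict, for example, with exponential interarrival times and $S_1=\theta+U$ for a continuous $U$; there all of Assumptions \ref{ass1}, \ref{ass2}$'$ and \ref{ass3} hold. Concretely, take the setting of Example \ref{ex4} with $S_1=\theta$, $S_2=s$, and $A_1,A_2$ i.i.d.\ exponential with rate $1$. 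Then $d^2E[W_3]/d\theta^2=e^{-\theta}-e^{-\theta-s}+\theta e^{-\theta-s}$, but $E[\widehat{W}_3'']=e^{-\theta}-(1+\theta)e^{-2\theta-s}+\theta e^{-\theta-s}$.

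\emph{The missing idea.} An augmentation term created at step $j$ lives on the boundary path where $W_{j+1}=0$. It must therefore be propagated with that restarted path's indicators, e.g.\ $(S_1')^2g_1(S_1)\mathbf{1}\{S_2>A_2\}$ rather than $(S_1')^2g_1(S_1)\mathbf{1}\{W_2+S_2>A_2\}$. The recursion (\ref{eq:Leibniz2b}) instead multiplies it by the nominal-path indicators. With the estimator as defined, the identity holds for $i\leq 2$, which your base cases establish, but not in general beyond.
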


\begin{proof}
Follows directly by applying Corollary \ref{cor1}, 
since the expectation of (\ref{eq:Leibniz2}) equals (\ref{eq:Leibniz}), 
using induction on $i>1$ for (\ref{eq:Leibniz2b}) starting with $i=1$ for (\ref{eq:Leibniz2a}).  \\
Starting with $i=1$: 
\begin{eqnarray*} 
E[\widehat{W}_2''] &=&  E[S_1'' {\bf 1}\{ S_1 >  A_1 \}  + (S_1')^2 g_1(S_1)]  \\
 &=& E\left[ \int_{0}^{S_1} S_1'' g_1(x) dx + (S_1')^2 g_1(S_1) \right]  \\
 &=& E\left[ \frac{d}{d\theta}  \int_{0}^{S_1} S_1' g_1(x) dx \right]  
 = \frac{d}{d\theta} E\left[  \int_{0}^{S_1} S_1' g_1(x) dx \right]  \\
 &=& \frac{d}{d\theta} E\left[ S_1'  {\bf 1}\{ S_1 >  A_1 \} \right]  
 =  \frac{dE[W_2']}{d\theta} \\
 &=&  \frac{d^2E[W_2]}{d\theta^2} \mbox{~by Corollary \ref{cor1}}.  
\end{eqnarray*} 
General $i$ follows inductively in the same manner: 
\begin{eqnarray*} 
E[\widehat{W}_{i+1}''] &=&  E[(W_i''+S_i'') {\bf 1}\{ W_i+S_i >  A_i \}  + (W_i'+S_i')^2 g_i(W_i+S_i)]  \\
 &=& E\left[ \int_{0}^{W_i+S_i} (W_i''+S_i'') g_i(x) dx + (W_i'+S_i')^2 g_i(W_i+S_i) \right]  \\
 &=& E\left[ \frac{d}{d\theta}  \int_{0}^{W_i+S_i} (W_i'+S_i') g_i(x) dx \right]  
 = \frac{d}{d\theta} E\left[  \int_{0}^{W_i+S_i} (W_i'+S_i') g_i(x) dx \right]  \\
 &=& \frac{d}{d\theta} E\left[ (W_i'+S_i') {\bf 1}\{ W_i+S_i >  A_i \} \right]  
 =  \frac{dE[W_{i+1}']}{d\theta}  \\
 &=&  \frac{d^2E[W_{i+1}]}{d\theta^2} \mbox{~by Corollary \ref{cor1}}.  
\end{eqnarray*} 
Note that the dominated convergence theorem applies throughout where needed, 
because in addition to Assumptions \ref{ass1} and \ref{ass2}, 
$g_i$ is a p.d.f. (existence due to Assumption \ref{ass3}), hence automatically integrable. 
\end{proof}

\medskip
\noindent
Theorem \ref{thm2} can be used to find an unbiased estimator for the 3rd derivative, starting with the analogous result to Corollary \ref{cor1}.  
\begin{Corollary} \label{cor2}
Under Assumptions  \ref{ass1}, \ref{ass2}$'$, and \ref{ass3},
$$
\frac{dE[\widehat{W}_i'']}{d\theta} = \frac{d^3E[W_i]}{d\theta^3} ~\forall i \geq 1.
$$
\end{Corollary}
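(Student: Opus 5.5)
Corollary \ref{cor2} follows from Theorem \ref{thm2} exactly as Corollary \ref{cor1} followed from Theorem \ref{thm1}. Theorem \ref{thm2} says that $E[\widehat{W}_i''] = d^2E[W_i]/d\theta^2$ for every $\theta$ in the parameter range, not just at a single point. So the two sides are the same function of $\theta$, and differentiating both sides once more gives the claim. The real content is therefore that $\theta \mapsto E[\widehat{W}_i'']$ is differentiable, and hence so is $d^2E[W_i]/d\theta^2$ and the third derivative exists. I would establish this directly from the estimator rather than assume it.

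I would proceed by induction on $i$, mirroring the proof of Theorem \ref{thm2}. The case $i=1$ is trivial, since $\widehat{W}_1''=0=W_1$ by Assumption \ref{ass1}. For $i=2$, condition on $S_1$ and integrate out $A_1$:
\[
E[\widehat{W}_2''] = E\left[ S_1'' \int_0^{S_1} g_1(x)\,dx + (S_1')^2 g_1(S_1) \right].
\]
I would show this is differentiable in $\theta$ by differentiating under $E$ with respect to the randomness of $S_1$ (via Lemma \ref{lemma3}, $S_1$ is a smooth function of $\theta$ and an underlying uniform). This needs the integrand to be a.e.\ differentiable in $\theta$ with an integrable dominating bound on its difference quotients. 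The first term is handled by the Leibniz rule as before. The second term requires $g_1$ to be differentiable at $S_1$, and this is where the hypotheses are thin. For general $i$, I would write $E[\widehat{W}_{i+1}''] = E[(\widehat{W}_i''+S_i'')G_i(W_i+S_i) + (W_i'+S_i')^2 g_i(W_i+S_i)]$, where $G_i$ is the c.d.f.\ of $A_i$, using independence from Assumption \ref{ass3}(ii). Differentiability of the first term would come from the inductive hypothesis together with continuity of $W_i$ in $\theta$ (Theorem \ref{thm1}). Differentiability of the second term would come from $W_i$ and $W_i'$ being piecewise smooth in $\theta$.

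The main obstacle is this differentiability and interchange step: $g_i(W_i+S_i)$ must be differentiable in $\theta$, and $W_i'$ jumps where a busy period ends. Under Assumptions \ref{ass1}, \ref{ass2}$'$, \ref{ass3} as stated, this is not automatic. I would first try to handle the jumps by again integrating out the next interarrival time, so that the discontinuities are smoothed into density terms. Failing that, I would interpret the corollary in the paper's spirit: assuming the third derivative exists, differentiate the identity of Theorem \ref{thm2}, and invoke dominated convergence as in the proof of Theorem \ref{thm1}, taking $S_1'''$ and $g_i'$ to exist and be suitably bounded. This is a tacit strengthening of Assumption \ref{ass2}$'$, which I would state explicitly as a remark.
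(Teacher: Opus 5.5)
Your core argument matches the paper's. The paper gives Corollary~\ref{cor2} without a separate proof, as the analogue of Corollary~\ref{cor1}: Theorem~\ref{thm2} holds identically in $\theta$, so $E[\widehat{W}_i'']$ and $d^2E[W_i]/d\theta^2$ have the same derivative wherever either is differentiable, and you are right that existence is tacitly assumed there rather than implied by Assumptions \ref{ass1}, \ref{ass2}$'$, \ref{ass3}. Your supplementary differentiability program is unnecessary for this, and the induction as sketched would not close: $\widehat{W}_i''$ jumps in $\theta$, so differentiability of $E[\widehat{W}_i'']$ alone does not give differentiability of $E[(\widehat{W}_i''+S_i'')G_i(W_i+S_i)]$; likewise, the conditions on $S_1'''$ and $g_i'$ you mention belong to Theorem~\ref{thm3}, not to this corollary.
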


\medskip
\noindent
The third-derivative estimator requires a corresponding extension of Assumption \ref{ass2}$'$:  \\ [8pt]
{\bf Assumption \ref{ass2}$''$}.
In addition to Assumption \ref{ass2}', assume 
$S_1(\theta)$ is thrice continuous w.r.t. $\theta$ 
such that $S_1'''$ exists almost everywhere (a.e.). 

\medskip
\noindent
Now take the expectation of the {\bf 1st term} of (\ref{eq:Leibniz2})
-- which corresponds to the IPA estimator --- again w.r.t. $A_i$:
$$
E_i[W_{i+1}''] = \int_{0}^{W_i+S_i} (\widehat{W}_i''+S_i'') g_i(x) dx 
= (\widehat{W}_i''+S_i'') \int_{0}^{W_i+S_i}  g_i(x) dx,  
$$
where the second equality follows from $\widehat{W}_i''$ and $S_i''$ not depending on $A_i$.

Differentiating using the Leibniz product and integral rules gives
$$
\frac{dE_i[W_{i+1}'']}{d\theta} = 
(\widehat{W}_i'''+S_i''') \int_{0}^{W_i+S_i}  g_i(x) dx + (\widehat{W}_i''+S_i'')(W_i'+S_i') g_i(W_i+S_i),  
$$
and adding that to the derivative w.r.t. $\theta$ of the 2nd term of (\ref{eq:Leibniz2}) given by  
$$
\frac{d}{d\theta} \left[ (W_i'+S_i')^2 g_i(W_i + S_i) \right] = 
2(W_i'+S_i') (\widehat{W}_i''+S_i'') g_i(W_i + S_i) + (W_i' + S_i')^3 g_i'(W_i + S_i),
$$
yields
\begin{eqnarray} \nonumber
\frac{dE_i[\widehat{W}_{i+1}'']}{d\theta} &=& 
(\widehat{W}_i'''+S_i''') \int_{0}^{W_i+S_i}  g_i(x) dx + 3(\widehat{W}_i''+S_i'')(W_i'+S_i') g_i(W_i+S_i) \\
&& + (W_i' + S_i')^3 g_i'(W_i + S_i),
\label{eq:Leibniz3rd} 
\end{eqnarray} 
which leads to the following estimator that can be shown to be unbiased under additional conditions on the interarrival p.d.f.s $\{g_i\}$:
\begin{eqnarray} 
\widehat{W}_{i+1}''' &=& (\widehat{W}_i''' + S_i''')  {\bf 1}\{ W_i + S_i >  A_i \} 
+ 3(W_i' +S_i') (\widehat{W}_i'' + S_i'') g_i(W_i + S_i) \nonumber \\ 
&& + (W_i' + S_i')^3 g_i'(W_i + S_i).
\label{eq:Leibniz3}
\end{eqnarray} 

\noindent
Again, considering $i=1$ and $i>1$ separately, using $W_1 = 0$ and $S_i' = 0$ for $i > 1$:
\begin{eqnarray} 
\widehat{W}_2''' & \hspace*{-7pt} =& \hspace*{-7pt} S_1'''  {\bf 1}\{ S_1 >  A_1 \} + 3 S_1' S_1'' g_1(S_1)
+ (S_1')^3 g_1'(S_1),  \label{eq:Leibniz3a} \\
\widehat{W}_{i+1}'''  & \hspace*{-7pt} =& \hspace*{-7pt}  \widehat{W}_i'''  {\bf 1}\{ W_i + S_i >  A_i \} + 3W_i' \widehat{W}_i'' g_i(W_i + S_i) + (W_i')^3 g_i'(W_i + S_i),~i > 1.~~~~~~~
\label{eq:Leibniz3b}
\end{eqnarray} 
For the estimators (\ref{eq:Leibniz3}) or (\ref{eq:Leibniz3a})/(\ref{eq:Leibniz3b})
to be well-defined, an additional technical assumption is imposed, 
under which an analogous result to Corollary \ref{cor1} from Theorem \ref{thm2} can be stated.
\begin{Assumption} \label{ass4}
The interarrival time p.d.f.s $\{ g_i \}$ are a.s. continuous on the positive real line.  \\
\end{Assumption}

\begin{Remark} \label{rem4}
This simple sufficient condition allows application of the dominated convergence theorem, but
much weaker conditions can be stated in terms of the support of the individual interarrival and service time distributions. 
For example, it applies to exponentially distribution interarrival times (Poisson arrival process) but excludes uniformly distributed interarrival times. 
\end{Remark}

\begin{Theorem} \label{thm3}
Under Assumptions \ref{ass1}, \ref{ass2}$''$, \ref{ass3}, and \ref{ass4},
$$
E[\widehat{W}_i'''] = \frac{d^3E[W_i]}{d\theta^3} ~\forall i \geq 1.
$$
\end{Theorem}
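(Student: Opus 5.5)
The proof mirrors the proof of Theorem \ref{thm2}, with Corollary \ref{cor2} in place of Corollary \ref{cor1}. The strategy is to show, by induction on $i$, that $E[\widehat{W}_{i+1}'''] = \frac{d}{d\theta}E[\widehat{W}_{i+1}'']$, and then to invoke Corollary \ref{cor2}. The case $i=1$ is trivial, since $W_1 \equiv 0$ gives all derivatives zero. For the inductive step I condition on everything except $A_i$ and take $E_i$, the expectation with respect to $A_i \sim g_i$. The quantities $W_i, W_i', \widehat{W}_i'', \widehat{W}_i''', S_i, S_i', S_i'', S_i'''$ do not depend on $A_i$ by Assumptions \ref{ass2}$''$ and \ref{ass3}(ii). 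Hence the first term of (\ref{eq:Leibniz3}) has conditional expectation $(\widehat{W}_i'''+S_i''')\int_0^{W_i+S_i} g_i(x)\,dx$, while the other two terms are already $A_i$-measurable-free.

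Next, I write the conditional expectation of (\ref{eq:Leibniz2}) as a single function of $\theta$:
$$
\Phi_i(\theta) = (\widehat{W}_i''+S_i'')\int_0^{W_i+S_i} g_i(x)\,dx + (W_i'+S_i')^2 g_i(W_i+S_i).
$$
I then differentiate $\Phi_i$ pathwise using the Leibniz integral rule, the product rule, and the chain rule. For the second term this uses $g_i'$, which exists where $g_i$ is differentiable. The result is exactly (\ref{eq:Leibniz3rd}). In this step the inductive hypothesis is used in the form that the pathwise derivative of $\widehat{W}_i''$ is $\widehat{W}_i'''$, and that of $W_i'$ is $W_i''$, which equals the IPA quantity $\widehat{W}_i''$ for the first term. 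I would state this carefully: the recursion (\ref{eq:Leibniz3}) is obtained by formally differentiating (\ref{eq:Leibniz2}), so the identity $E_i[\widehat{W}_{i+1}'''] = \frac{d}{d\theta}\Phi_i(\theta)$ holds a.s. It follows that
$$
E[\widehat{W}_{i+1}'''] = E\left[\frac{d}{d\theta}\Phi_i\right] = \frac{d}{d\theta}E[\Phi_i] = \frac{d}{d\theta}E[\widehat{W}_{i+1}''],
$$
and Corollary \ref{cor2} then yields $\frac{d^3E[W_{i+1}]}{d\theta^3}$. The base case $i=1$ specializes to (\ref{eq:Leibniz3a}) and repeats the display from the proof of Theorem \ref{thm2} verbatim, with one more derivative taken.

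The main obstacle is justifying the interchange $E[\frac{d}{d\theta}\Phi_i] = \frac{d}{d\theta}E[\Phi_i]$. Unlike in Theorem \ref{thm2}, the integrand now contains $g_i(W_i+S_i)$ evaluated at a $\theta$-dependent point, so its derivative involves $g_i'$. Moreover, $\widehat{W}_i''$ itself contains density terms, so $\Phi_i$ is continuous in $\theta$ only if $g_i$ has no jumps on the range of $W_i+S_i$. This is where Assumption \ref{ass4} enters. Continuity of $g_i$ on $(0,\infty)$ makes $\Phi_i$ continuous and a.e. differentiable in $\theta$, with no hidden delta-function contributions of the kind warned about in Remark \ref{rem1}. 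I would then apply the dominated convergence theorem (equivalently, a mean-value bound on difference quotients), dominating $|\frac{d}{d\theta}\Phi_i|$ by an integrable function built from $|S_1'|,|S_1''|,|S_1'''|$, $\sup g_i$, and $\sup|g_i'|$ on the relevant range. If needed, I would impose or argue boundedness of these quantities, in the same spirit as the closing remark in the proof of Theorem \ref{thm2}. Continuity of $\widehat{W}_i''$ in $\theta$, needed to propagate the induction, follows from the same continuity of $g_j$, $j<i$.
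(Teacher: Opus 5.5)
\textbf{There is a genuine gap.} Your proposal follows the paper's proof essentially step for step: condition on $A_i$, Leibniz-differentiate $\Phi_i$, interchange $d/d\theta$ and $E$, then invoke Corollary \ref{cor2}. It inherits that proof's gap, which sits in the inductive step $i\ge 2$. You need $\Phi_i$ to be continuous in $\theta$ along each sample path, with a.e.\ derivative $E_i[\widehat W_{i+1}''']$. Neither holds.

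Both $W_i'=S_1'\mathbf{1}\{i\le i^*\}$ and $\widehat W_i''$ carry the indicators $\mathbf{1}\{W_j+S_j>A_j\}$, $j<i$. These jump whenever a change in $\theta$ moves $W_j+S_j$ across the fixed $A_j$, i.e., whenever $i^*$ changes. Assumption \ref{ass4} concerns only $g_j$ and does nothing about these jumps, so your sentence ``continuity of $\widehat W_i''$ \ldots\ follows from the same continuity of $g_j$'' is false.

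Nor does the induction hypothesis say that the pathwise derivative of $\widehat W_i''$ is $\widehat W_i'''$. Also, the pathwise derivative of $W_i'$ is the IPA quantity $W_i''$, not $\widehat W_i''$. Theorems \ref{thm2} and \ref{thm3} at index $i$ are identities between expectations. Such an identity cannot be used inside a product with the correlated $\theta$-dependent factors $\int_0^{W_i+S_i}g_i(x)\,dx$ and $g_i(W_i+S_i)$.

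Concretely, take the location case with $S_1=\theta$. Then $\Phi_2=g_1(\theta)\int_0^{W_2+S_2}g_2(x)\,dx+\mathbf{1}\{\theta>A_1\}g_2(W_2+S_2)$ jumps by $g_2(S_2)$ at $\theta=A_1$. Elsewhere its derivative differs from $E_2[\widehat W_3''']$ by $2W_2'g_1(\theta)g_2(W_2+S_2)$. The paper's inductive displays have the same defect: they pass between $\widehat W_i''$ and the IPA $W_i''$, and they interchange $d/d\theta$ with $E$ even though $W_i'$ jumps in $\theta$. Only your step from $i=1$ to $i=2$ is sound, because there only $S_1$ and $A_1$ enter and $E_1[\cdot]$ is smooth in $\theta$.

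This cannot be repaired by extra regularity, because the statement fails at $i=3$. Keep the location case $S_1'=1$, $S_1''=S_1'''=0$, and write $S_1=\theta$; taking $S_1=\theta+U$ with $U$ continuous merely averages what follows over $U$. Let $A_1,A_2$ be exponential with rate $\lambda$, so Assumption \ref{ass4} holds, and let $S_2$ be independent. Set $\phi(w)=P(w+S_2>A_2)$ and $J(\theta)=\int_0^\theta\phi'(\theta-a)g_1(a)\,da$. Conditioning on $A_1$ gives $\frac{dE[W_3]}{d\theta}=\int_0^\theta\phi(\theta-a)g_1(a)\,da$, hence
\[
\frac{d^3E[W_3]}{d\theta^3}=g_1'(\theta)\phi(0)+g_1(\theta)\phi'(0)+\int_0^\theta\phi''(\theta-a)g_1(a)\,da .
\]
In contrast, (\ref{eq:Leibniz3b}) with $\widehat W_2''=g_1(\theta)$ and $\widehat W_2'''=g_1'(\theta)$ gives
\[
E[\widehat W_3''']=g_1'(\theta)E[\phi(W_2)]+3g_1(\theta)J(\theta)+\int_0^\theta\phi''(\theta-a)g_1(a)\,da .
\]
The difference is $g_1'(\theta)E[\phi(W_2)-\phi(0)]+g_1(\theta)\bigl(3J(\theta)-\phi'(0)\bigr)$, which tends to $-\lambda^2E[e^{-\lambda S_2}]\neq 0$ as $\theta\downarrow 0$. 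Similarly, $E[\widehat W_3'']-\frac{d^2E[W_3]}{d\theta^2}=g_1(\theta)E[\phi(W_2)-\phi(0)]>0$. So Theorem \ref{thm2} and Corollary \ref{cor2}, on which your last step relies, already fail at $i=3$.

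The missing idea is how to handle the boundary term created at stage $j$ (density $g_j$ times the jump). It must be propagated along the path restarted from $W_{j+1}=0$, not multiplied by the nominal indicators. For example, an unbiased second-derivative estimator for $W_3$ uses $(S_1')^2g_1(S_1)\mathbf{1}\{S_2>A_2\}$ where (\ref{eq:Leibniz2b}) has $(S_1')^2g_1(S_1)\mathbf{1}\{W_2+S_2>A_2\}$. As stated, your argument (and, in general, the theorem) is valid only for $i\le 2$.
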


\begin{proof}
Assumption \ref{ass4} enables interchange of expectation and differentiation to establish
$$
E\left[ \frac{d}{d\theta} \left[ (W_i'+S_i')^2 g_i(W_i + S_i) \right] \right] = 
\frac{d}{d\theta} \left[ E[(W_i'+S_i')^2 g_i(W_i + S_i)] \right],
$$
which may not hold if $g$ is discontinuous on the support of its domain.  \\ 

\noindent
The rest of the proof proceeds analogously to the proof of Theorem \ref{thm2}
by applying Corollary \ref{cor2}, 
since the expectation of (\ref{eq:Leibniz3}) equals (\ref{eq:Leibniz3rd}). 
However, the proof is cleaner if the two parts comprising the final estimator are treated separately. \\
Starting with $i=1$: 
$
\widehat{W}_2''' =  [S_1''' {\bf 1}\{ S_1 >  A_1 \}  +  S_1' S_1'' g_1(S_1)]
 + [2S_1' S_1'' g_1(S_1) + (S_1')^3 g_1'(S_1)],
$
so 
$$
E[\widehat{W}_2'''] =  E[S_1''' {\bf 1}\{ S_1 >  A_1 \}  +  S_1' S_1'' g_1(S_1)]   
+E[2S_1' S_1'' g_1(S_1) + (S_1')^3 g_1'(S_1)] ,
$$
considered separately: 
\begin{eqnarray*} 
E[S_1''' {\bf 1}\{ S_1 >  A_1 \}  +  S_1' S_1'' g_1(S_1)]  
 &=& E\left[ \frac{d}{d\theta}  \int_{0}^{S_1} S_1'' g_1(x) dx \right] \\
 &=& \frac{d}{d\theta} E\left[  \int_{0}^{S_1} S_1'' g_1(x) dx \right]  \\
 &=& \frac{d}{d\theta} E\left[ S_1''  {\bf 1}\{ S_1 >  A_1 \} \right].  \\
%
E[ 2S_1' S_1'' g_1(S_1) + (S_1')^3 g_1'(S_1)]
 &=& E\left[ \frac{d}{d\theta}   (S_1')^2 g_1(S_1) \right]  \\
 &=& \frac{d}{d\theta} E\left[   (S_1')^2 g_1(S_1) \right], 
\end{eqnarray*} 
where the last equality follows from the dominated convergence theorem under Assumption \ref{ass4}. 
Putting the two parts together, 
$$
E[\widehat{W}_2''']  
=  \frac{d}{d\theta} E \left[ S_1''  {\bf 1}\{ S_1 >  A_1 \} + (S_1')^2 g_1(S_1) \right]
=  \frac{dE[W_2'']}{d\theta} 
 =  \frac{d^3E[W_2]}{d\theta^3}  \mbox{~by Corollary \ref{cor2}}.  
$$
\noindent
The case for $i > 2$ follows inductively in the same manner for the decomposed estimator:
$$
\widehat{W}_i''' =  [\widehat{W}_i''' {\bf 1}\{ W_i+S_i >  A_i \}  +  W_i' \widehat{W}_i'' g_i(W_i+S_i)]   
 +[2 W_i' \widehat{W}_i'' g_i(W_i+S_i) + (W_i')^3 g_i'(W_i+S_i)] , 
$$
\begin{eqnarray*} 
E[\widehat{W}_i''' {\bf 1}\{ W_i+S_i >  A_i \}  +  W_i' \widehat{W}_i'' g_i(W_i+S_i)]  
 &=& E\left[ \frac{d}{d\theta}  \int_{0}^{W_i+S_i} W_i'' g_i(x) dx \right] \\
 &=& \frac{d}{d\theta} E\left[  \int_{0}^{W_i+S_i} W_i'' g_i(x) dx \right]  \\
 &=& \frac{d}{d\theta} E\left[ W_i''  {\bf 1}\{ W_i+S_i >  A_1 \} \right].  \\
E[ 2W_1' \widehat{W}_i''  g_i(W_i+S_i) + (W_i')^3 g_i'(W_i+S_i)]
 &=& E\left[ \frac{d}{d\theta}   (W_i')^2 g_i(W_i+S_i) \right]  \\
 &=& \frac{d}{d\theta} E\left[   (W_i')^2 g_i(W_i+S_i) \right], 
\end{eqnarray*} 
where the last equality follows from the dominated convergence theorem under Assumption \ref{ass4}. 
Putting the two parts together, 
$$
E[\widehat{W}_i''']  
=  \frac{d}{d\theta} E \left[ W_i''  {\bf 1}\{ W_i+S_i >  A_i \} + (W_i')^2 g_i(W_i+S_i) \right]
=  \frac{dE[W_i'']}{d\theta} 
 =  \frac{d^3E[W_i]}{d\theta^3}  
 $$
by Corollary \ref{cor2}.  
\end{proof}

\medskip
\noindent
Applying these results to Examples \ref{ex1} and \ref{ex2} simplifies the expressions for the higher-order derivatives, since $S_i'' = 0 ~\forall i$.

\begin{eqnarray}
\widehat{W}_2'' &=&  (S_1')^2 g_1(S_1),  \label{eq:ex1} \\
\widehat{W}_{i+1}'' &=&  \widehat{W}_i'' {\bf 1}\{W_i + S_i >  A_i\}  + (W_i')^2 g_i(W_i + S_i),~i > 1, \label{eq:ex2} \\
\widehat{W}_2''' &=& (S_1')^3 g_1'(W_1 + S_1),  \label{eq:ex3} \\
\widehat{W}_{i+1}'''  &=& \widehat{W}_i'''  {\bf 1}\{ W_i + S_i >  A_i \} + 3W_i' \widehat{W}_i'' g_i(W_i + S_i) + (W_i')^3 g_i'(W_i + S_i),~i > 1. \label{eq:ex4}
\end{eqnarray}

\begin{Example} \label{ex4}
G/D*/1 Queue: 
Customers arrive according to a renewal process, and 
service times are deterministic (but possibly all different) with $S_1 = \theta$. 
In this case, the higher-order estimators for $W_2$ are deterministic and identically equal to the desired derivatives, i.e., estimators (\ref{eq:ex1}) and (\ref{eq:ex3}) become 
$$
\widehat{W}_2'' = g_1(\theta) = \frac{d^2E[W_2]}{d\theta^2}, 
\widehat{W}_2''' = g_1'(\theta) = \frac{d^3E[W_2]}{d\theta^3},
$$
whereas the estimators (\ref{eq:ex2}) and (\ref{eq:ex4}) for subsequent customers remain stochastic (assuming subsequent interarrrival times are random variables) until they vanish at the end of a busy period. 

\end{Example} 
\noindent
Note that if the first interarrival time is uniformly distributed, then $g_1'=0$ a.e., 
but then Assumption \ref{ass2}' is not satisfied. 

\begin{Remark} \label{rem5}
Strictly speaking, Theorems \ref{thm1}-\ref{thm3} do not apply to Example \ref{ex4}, since Assumption \ref{ass3} is not satisfied, but for deterministic services times, it can be shown (but not done here) that all three theorems do in fact apply, just as stated in Remark \ref{rem3} regarding Examples \ref{ex1} and \ref{ex2} holding more generally whenever the parameter is a location or scale parameter, respectively. 
\end{Remark} 

\begin{Example} 
M/G/1 Queue: 
Customers arrive according to a Poisson process with rate $\lambda$, 
and $S_1 \sim F(\cdot; \theta)$, for $F$ satisfying Assumption \ref{ass2}'', independent of the arrival process,
so Theorem \ref{thm3} applies, with the p.d.f.\ for the i.i.d.\ interarrival times 
given by $g_i(x) \equiv g(x)$, viz., 
\begin{eqnarray*}
g(x) &=& \lambda e^{-\lambda x},  \\
g'(x) &=& -\lambda^2 e^{-\lambda x},  \\
g^{(n)}(x) &=& (-1)^n(\lambda)^{n+1} e^{-\lambda x}.  
\end{eqnarray*}
In this case, recalling that $i^* \equiv \min \{i\geq 1: W_i + S_i > A_i; W_1=0 \}$ denotes the index of the customer ending the first busy period, the estimators are given by 
\begin{eqnarray*} 
W_i' &=& S_1' {\bf 1}\{  S_1  >  A_1 \} {\bf 1}\{ i \leq i^* \},~i > 1,  \\
\widehat{W}_2'' &=&  S_1'' {\bf 1}\{ S_1 >  A_1 \}  + (S_1')^2 \lambda e^{-\lambda S_1},   \\
\widehat{W}_{i+1}'' &=&  \widehat{W}_i'' {\bf 1}\{W_i + S_i >  A_i\}  + (W_i')^2 \lambda e^{-\lambda(W_i + S_i)},~i > 1, \\
\widehat{W}_2''' &=& S_1'''  {\bf 1}\{ S_1 >  A_1 \} + 3 S_1' S_1'' \lambda e^{-\lambda S_1} - (S_1')^3 \lambda^2 e^{-\lambda S_1},   \\
\widehat{W}_{i+1}'''  &=& \widehat{W}_i'''  {\bf 1}\{ W_i + S_i >  A_i \} + 3W_i' \widehat{W}_i'' \lambda e^{-\lambda(W_i + S_i)} - (W_i')^3 \lambda^2 e^{-\lambda(W_i + S_i)},~i > 1. 
\end{eqnarray*} 
\end{Example}

\section{Conclusions and Future Research}

By applying the Leibniz calculus rules of differentiation of integrals and products of functions, 
new estimators for higher-order derivatives of customer waiting times in the FCFS single-server queue have been derived. 
Prior to this, higher-order derivative estimators have been derived using likelihood ratio methods and second-derivative estimators using conditional Monte Carlo, but the latter had not been extended to higher-order derivatives. 
The second-derivative estimator derived here differs from the latter but can be transformed to it using further conditioning. 

The derivative estimators derived here are for a single customer's waiting time w.r.t. a parameter in the 1st service time distribution for derivatives up through the 3rd order. 
However, ongoing and future work includes extending and generalizing the results in numerous directions, including the following:
\begin{itemize}
\item[(i)] parameter as a common parameter for all customer service times, e.g., the independent and identically distributed (i.i.d.) case; 
\item[(ii)] customer total time in system;
\item[(iii)] average customer waiting times; 
\item[(iv)] parameters in the interarrival time distribution; 
\item[(v)] general $n$th-order derivatives.  
\end{itemize}
Each of these could be addressed
based on the general framework for applying the Leibniz rule developed in \cite{ReFuLe25}.
Also of practical interest would be to compare the statistical properties of the new higher-order derivative estimators presented here with other estimators derived using LRM, weak derivatives \citep{Pf89}, augmented IPA \citep{GaShSr92}, smoothed perturbation analysis \citep{GoHo87,FuHu97}, and other techniques, both theoretically and empirically. 
Preliminary numerical results for 2nd-derivative estimation of expected system time reported in \cite{Fu26b} are promising.

\bibliographystyle{Chicago}
\bibliography{GG1}

\end{document}